\newtheorem{Theorem}{Theorem}
\newtheorem{Definition}{Definition}
\newcommand{\mktype}[1]{\mathsf{#1}}
\newcommand{\Int}{\mktype{Int}}
\newcommand{\Qbit}{\mktype{Qbit}}
\newcommand{\Qdit}{\mktype{Qdit}}
\newcommand{\NS}{\mktype{NS}}
\newcommand{\Val}{\mktype{Val}}
\newcommand{\Op}[1]{\mktype{Op}(#1)}
\newcommand{\mkterm}[1]{\mathsf{#1}}
\newcommand{\unit}{\mkterm{unit}}
\newcommand{\ctxt}[2]{{#1}[#2]}
\newcommand{\bnf}{::=}
\newcommand{\alt}{~|~}
\newcommand{\Prob}[1]{\boxplus_{#1}}
\newcommand{\cnfig}[3]{({#1};{#2};{#3})}
\newcommand{\ptrns}[3]{{#1}\stackrel{#2}{\longrightarrow}{#3}}
\newcommand{\mkRrule}[1]{\mbox{\textsc{R-#1}}}
\newcommand{\Rplus}{\mkRrule{Plus}}
\newcommand{\Rmeasure}{\mkRrule{Measure}}
\newcommand{\Rcontext}{\mkRrule{Context}}
\newcommand{\Rtrans}{\mkRrule{Trans}}
\newcommand{\typed}[2]{{#1}\mathrel{\!:\!}{#2}}
\newcommand{\ket}[1]{|#1\rangle}
\newcommand{\nil}{\mathbf{0}}
\renewcommand{\parallel}{\mathbin{\mid}}
\newcommand{\inp}[2]{{#1}?{[#2]}}
\newcommand{\outp}[2]{{#1}!{[#2]}}
\renewcommand{\vec}[1]{\widetilde{#1}}
\newcommand{\new}{\mathsf{new}\ }
\newcommand{\qdit}{\mathsf{qdit}\ }
\newcommand{\chant}[1]{\widehat{~}[#1]}
\newcommand{\tid}[2]{{#1}\mathrel{\!:\!}{#2}}
\newcommand{\qgate}[1]{\mathsf{#1}}
\newcommand{\pname}[1]{\mathit{#1}}
\newcommand{\action}[1]{\{#1\}}
\newcommand{\trans}{\mathbin{*\!\!=}}
\newcommand{\sep}{\,.\,}
\newcommand{\measure}{\mathsf{measure}\ }
\newcommand{\qstore}[2]{[#1 \mapsto #2]}
\newcommand{\transition}[1]{\stackrel{#1}{\longrightarrow}}
\newcommand{\transitione}{\longrightarrow_e}
\newcommand{\transitionv}{\longrightarrow_v}
\newcommand{\ptrans}[1]{\stackrel{#1}\rightsquigarrow}
\newcommand{\weaktrans}[1]{\stackrel{#1}{\Longrightarrow}}
\newcommand{\opttrans}[1]{\stackrel{#1}{\longrightarrow}^{+}}
\newcommand{\cfgdistsum}{\oplus}
\newcommand{\Dist}[2]{\cfgdistsum_{#1}~#2~}
\newcommand{\bra}[1]{\langle#1|}
\newcommand{\ketbra}[2]{\ket{#1}\bra{#2}}
\newcommand{\rhoe}{\rho_E}
\newcommand{\gX}{\qgate{X}}
\newcommand{\gZ}{\qgate{Z}}
\newcommand{\gH}{\qgate{H}}
\newcommand{\gRC}{\qgate{R}_{c}}
\newcommand{\gLC}{\qgate{L}_{c}}
\newcommand{\ltrm}[3]{\lambda{#1}\bullet{#2}; {#3}}
\newcommand{\states}{\ensuremath{\mathcal{S}}}
\newcommand{\nstates}{\ensuremath{\mathcal{S}_n}}
\newcommand{\pstates}{\ensuremath{\mathcal{S}_p}}
\newcommand{\pbsim}{\leftrightarroweq}
\newcommand{\fpbsim}{\leftrightarroweq^c}
\renewcommand{\vec}[1]{\widetilde{#1}}
\newcommand{\trace}{\mathrm{tr}}
\newcommand{\pqwire}{\mathit{QWire}}
\newcommand{\pteleport}{\mathit{Teleport}}
\newcommand{\palice}{\pname{Alice}}
\newcommand{\pbob}{\pname{Bob}}
\newcommand{\qw}[1][-1]{\ar @{-} [0,#1]}
\newcommand{\qwx}[1][-1]{\ar @{-} [#1,0]}
\newcommand{\cw}[1][-1]{\ar @{=} [0,#1]}
\newcommand{\cwx}[1][-1]{\ar @{=} [#1,0]}
\newcommand{\gate}[1]{*+<.6em>{#1} \POS ="i","i"+UR;"i"+UL **\dir{-};"i"+DL **\dir{-};"i"+DR **\dir{-};"i"+UR **\dir{-},"i" \qw}
\newcommand{\meter}{*=<1.8em,1.4em>{\xy ="j","j"-<.778em,.322em>;{"j"+<.778em,-.322em> \ellipse ur,_{}},"j"-<0em,.4em>;p+<.5em,.9em> **\dir{-},"j"+<2.2em,2.2em>*{},"j"-<2.2em,2.2em>*{} \endxy} \POS ="i","i"+UR;"i"+UL **\dir{-};"i"+DL **\dir{-};"i"+DR **\dir{-};"i"+UR **\dir{-},"i" \qw}
\newcommand{\control}{*!<0em,.025em>-=-<.2em>{\bullet}}
\newcommand{\ctrl}[1]{\control \qwx[#1] \qw}
\newcommand{\rstick}[1]{*!L!<-.5em,0em>=<0em>{#1}}
\newcommand{\lstick}[1]{*!R!<.5em,0em>=<0em>{#1}}
\newcommand{\Qcircuit}{\xymatrix @*=<0em>}
\begin{document}

\mainmatter  

\title{Formal verification of higher dimensional quantum protocols}

\titlerunning{Verification of higher dimensional quantum protocols using quantum formal methods}
\authorrunning{I. V. Puthoor}

\author{Ittoop Vergheese Puthoor}

\institute{School of Computing Science,
\\ Newcastle University, UK
}

%
%

\toctitle{Lecture Notes in Computer Science}
\tocauthor{Authors' Instructions}
\maketitle

\begin{abstract}
Formal methods have been a successful approach for modelling and verifying the correctness of complex technologies like microprocessor chip design, biological systems and others. This is the main motivation of developing quantum formal techniques which is to describe and analyse quantum information processing systems. Our previous work demonstrates the possibility of using a quantum process calculus called Communicating Quantum Processes (CQP) to model and describe higher dimensional quantum systems. By developing the theory to generalise the fundamental gates and Bell states, we have modelled quantum qudit protocols like teleportation and superdense coding in CQP. In this paper, we demonstrate the use of CQP to analyse higher dimensional quantum protocols. The main idea is to define two processes, one modelling the real protocol and the other expressing a specification, and prove that they are behaviourally equivalent. This is a work-in-progress and we present our preliminary results in extending the theory of behavioural equivalence in CQP to verify higher dimensional quantum protocols using qudits.

\keywords{Formal methods, quantum computing, semantics, verification, quantum process calculus.}
\end{abstract}

\section{Introduction}
\label{sec-intro}
\label{sec:intro}

Quantum technologies is a rapidly growing field that offers tremendous potential in applications such as computing, communications, imaging and metrology. Quantum computing promises to offer high improvements in the performance of certain computations by exploiting the inherent parallelism that is achieved using the principles of quantum physics. For example, Shor's algorithm~\cite{Shor1994} is one of the few known quantum algorithms that demonstrates polynomial speedup compared to any known classical algorithm for the factorisation of integers. On the other hand we have quantum cryptographic systems which provide security that cannot be broken even if there are systems that have unlimited computing power. These computing and cryptographic systems are complex technologies and it is essential to ensure that these systems are reliable. Hence, there is need now to develop tools and techniques from \emph{formal methods} for modelling and verifying the correctness of these complex systems.

\emph{Formal methods} is widely regarded as a successful approach in the modelling and verification of classical systems. By using the rigorously specified mathematical theories and tools from \emph{formal methods}, one can build and test systems in order to ensure correct behaviour. The success of formal approaches in analysing classical systems provided the main motivation to extend these techniques and tools to specify and verify the correctness of practical quantum technologies. Such techniques are referred to as quantum formal methods. Based on process calculus, this line of research in quantum formal methods, known as \emph{quantum process calculus}, helps to describe and analyse the behaviour of systems that combine both quantum and classical elements. 

In this paper, we use an approach based on a quantum process calculus called Communicating Quantum Processes (CQP), developed by Gay and Nagarajan \cite{Gay2005}. 
The concept of \emph{behavioural equivalence} between processes is a congruence has been established in CQP \cite{DavidsonThesis}, and has been used to analyse quantum protocols such as teleportation, superdense coding and quantum error correction \cite{Davidson2011,PuthoorThesis}. Furthermore, there has been approaches in CQP to analyse realistic quantum information processing systems such as linear optical quantum computing \cite{Arnold2013,Arnold2014}. 

Previous work has shown that the syntax and semantics of CQP is not just limited to model systems that involves qubits but also can describe higher dimensional quantum systems i.e. qudits (a quantum system with $d$-dimensional Hilbert space) \cite{Gay2013}. This paper is a work in progress and presents the preliminary results for extending the theory of behavioural equivalence in CQP to analyse higher dimensional quantum systems. 

.

\section{Background}
\label{sec-prelim}
\label{sec:prelim}
 A \emph{qudit} is a physical system that is associated with a complex \emph{Hilbert space} $\mathbb{H}$, which is a $d$-dimensional vector space over the complex numbers, $\mathbb{C}$ with a basis denoted  by $\{\ket{0},\ket{1},\sep\sep\sep,\ket{d-1}\}$. Generally, we write the state of a qudit as 
 \begin{equation}
 \ket{\psi} = \sum^{d-1}_{i=0}\alpha_{i}\ket{i}
 \end{equation}
where $\alpha_{i} \in \mathbb{C}$ and $\Sigma^{d-1}_{i=0} \mid\alpha_{i}\mid^{2} = 1$. 
 Then, for a two-qudit quantum system $\mathbb{H_{A}} \otimes \mathbb{H_{B}}$, we recall the expressions for the quantum gate operators from  \cite{Gay2013} as: the generalised CNOT Right-Shift gate ($R_{C}$) that has control qudit $\ket{\psi} \in \mathbb{H_{A}}$  and target qudit $\ket{\phi} \in \mathbb{H_{B}}$. Then, for the set of standard basis states $\ket{m} \otimes \ket{n}$ of $\mathbb{H_{A}} \otimes \mathbb{H_{B}}$, we say, $R_{C}\ket{m} \otimes \ket{n} = \ket{m} \otimes \ket{n \oplus m}$. Likewise, the CNOT Left-shift gate ($L_{C}$) is given by ${L}_{C}\ket{m} \otimes \ket{n} = \ket{m} \otimes \ket{n \ominus m}$. The generalised Pauli operators are given by $X^{j}\ket{m} = \ket{m \oplus j}$ and $ Z^{k}\ket{m} = exp(2\pi ikm/d) \ket{m} = \omega^{km}\ket{m}$ where $\omega \equiv exp(2\pi i/d)$. Another essential unitary operator useful for manipulating qudits is the Hadamard operator denoted as $H$ and defined by $ H\ket{j} =  \frac{1}{\sqrt{d}}\sum_{m=0}^{d-1}\omega^{-jm}\ket{m}$. Together with operators $R_{C}$ and $H$ acting on the quantum system, we obtain the maximal entangled state, which is referred to as the generalised Bell state, defined as $\ket{\Psi^{nm}}_{AB} = \frac{1}{\sqrt{d}}\sum_{j=0}^{d-1}\omega^{-jn}\ket{j}_{A} \otimes \ket{j \oplus m}_{B}$. Please see \cite{Gay2013} for detailed information on the generalised quantum gates.

\section{Modelling quantum protocols in process calculus}
\label{sec-CQP}
\label{sec:CQP}
Communicating Quantum Processes (CQP) \cite{Gay2005} is a quantum process calculus that is based on the $\pi$-calculus  \cite{Milner1999,Milner1992} with primitives for quantum information. This was developed for formally modelling and analysing systems that combine both quantum and classical communicating and computing systems. General concept is that a system is considered to be made up of independent components or \emph{processes}. The \emph{processes} can communicate by sending and receiving data along \emph{channels} and these data are qubits or classical values. A distinctive feature of CQP is its static type system \cite{Gay2006a}, the purpose of which is to classify classical and quantum data and also to enforce the no-cloning property of quantum information. Due to space constraints, the syntax and semantics are presented in the appendix for reference. 

\subsection{Teleportation}

\begin{figure}
 \Qcircuit @C=1em @R = 2.0em {
    & & & & &  \lstick{x = \ket{\psi}}       &     \qw  &  \qw  &  \qw  & \qw & \ctrl{1} & \qw & \gate{H} & \qw & \qw & \qw & \meter & \control \cw \cwx[2]  &\\ 
    & & & & &   \lstick{z = \ket{0}}       &     \gate{H} & \ctrl{1} & \qw  & \qw & \gate{L_C} & \qw & \qw & \qw & \meter & \control \cw \cwx[1] &\\   
     & & & & &  \lstick{y = \ket{0}}      &     \qw &  \gate{R_C} & \qw & \qw & \qw & \qw & \qw & \qw & \qw &  \gate{X^{-M_1}} & \qw & \gate{Z^{M_2}} & \qw  & \rstick{\ket{\psi}}
}
\caption{\label{fig:Teleportation}Teleportation of qudit $\ket{\psi}$}
\end{figure} 

Quantum teleportation \cite{Barnett2009} is a protocol, which allows two users who share an entangled pair of qudits, to exchange an unknown quantum state by communicating only two classical values depending on the dimension $d$ of the system. The quantum circuit model of the protocol for qudits is shown in Figure~\ref{fig:Teleportation}. Note that the above quantum circuit model doesn't provide the complete description of the protocol. For example, it doesn't provide information on the number of users in the protocol and the communications between the users. The benefit of using the CQP language is that it captures these key informations and provide a clear and formal descriptions. The CQP definition of the teleportation protocol is given as
\[
\begin{array}{ll}
\multicolumn{2}{l}{\pteleport = (\qdit y,z)(\action{z\trans\gH}\sep\action{z,y\trans\gRC}\sep(\new e)(\palice(c,e) \parallel \pbob(e,d)))}
\end{array}
\]
This consists of two processes:  $\palice$ and $\pbob$,  we say the sender is $\palice$ and the receiver is $\pbob$.  $\palice$ possesses the qudit labelled $x$ which is in some unknown state $\ket{\psi}$; this is the qudit to be teleported. Qudits $y$ and $z$ are an entangled pair, which is generated by applying a Hadamard and CNOT- Right Shift gate to the qudits. Then qudit $z$ is given to $\palice$ and qudit $y$ is given to $\pbob$.The CQP definitions of $\palice$ and $\pbob$ are as follows,
\[
\begin{array}{ll}
\multicolumn{2}{l}{\palice(\typed{c}{\chant{\Qdit}},\typed{e}{\chant{\Val,\Val}}) = \inp{c}{\typed{x}{\Qdit}}\sep\action{x,z\trans\gLC}\sep\action{x\trans\gH}\sep\outp{e}{\measure z,\measure x}\sep\nil}
\end{array}
\]
\[
\begin{array}{ll}
 \multicolumn{2}{l}{\pbob(\typed{e}{\chant{\Val,\Val}},\typed{d}{\chant{\Qdit}}) = \inp{e}{\typed{M_{1}}{\Val},\typed{M_{2}}{\Val}}\sep\action{y\trans\gX^{-M_{1}}}\sep\action{y\trans\gZ^{M_{2}}}\sep\outp{d}{y}\sep\nil}
\end{array}
\]
The execution of the teleportation protocol are discussed in our previous work \cite{Gay2013}. The focus of this work is to analyse the protocol by extending the theory of equivalence in CQP to analyse the qudit teleportation protocol. We present our preliminary results in the following sections.
\section{Behavioural Equivalence of CQP Processes}
\label{sec-equivalence}
\label{sec:equivalence}
In this section, we present our preliminary work or results in applying the theory of equivalence \cite{DavidsonThesis} in CQP to verify higher dimensional quantum protocols like qudit teleportation. The theory was developed for qubits and in this paper we briefly present the initial investigations in extending the theory of equivalence to analyse qudit protocols. Although, it is assumed to be a straightforward task but the definitions and theorems needs to be checked and is still a work in progress. Previous section demonstrated the CQP model of qudit teleportation ($\pname{Teleport}$) and the execution of the protocol is discussed in detail in our previous work \cite{Gay2013}. In order to analyse $\pname{Teleport}$, we formally define a specification process $\pname{QWire}$ that describes the high-level observational behaviour of $\pname{Teleport}$. Verification is then achieved by proving that the two processes are bisimilar.

The concept of behavioural equivalence in CQP is a congruence and is a form of \emph{probabilistic branching bisimilarity} \cite{Trcka2008}. This is mainly adapted due to the probabilistic behaviour that arises from a quantum measurement. It is important to note that the reduced density matrices of the transmitted quantum information, in our case qudits, are required to be equal when considering the matching of input or output transitions. The following definitions are an extension from Davidson's
thesis \cite{DavidsonThesis} that is applicable in general to qudits.

\textbf{Notation:} Let $\opttrans{\tau}$ denote zero or one $\tau$ transitions; let $\weaktrans{ }$ denote zero or more $\tau$ transitions; and let $\weaktrans{\alpha}$ be equivalent to $\weaktrans{ }
\transition{\alpha} \weaktrans{ }$. We write $\vec{q}$ for a list of qudit names, and similarly for other lists.

\begin{Definition}[Density Matrix of Configurations]
  \label{def:density_matrix_configs}
  Let $\sigma_{ij} = \qstore{\vec{x}}{\ket{\psi_{ij}}}$ and $\vec{y}
  \subseteq \vec{x}$ and $t_{ij} = (\sigma_{ij}; \omega;
  \ltrm{\vec{w}}{P}{\vec{v_{ij}}})$ and $t = \Dist{ij}{g_{ij}} t_{ij}$. Then
\[
\begin{array}{llcll}
1. & \rho(\sigma_{ij}) = \ketbra{\psi_{ij}}{\psi_{ij}} & ~~~~~~ & 4. &
\rho^{\vec{y}}(t_{ij}) = \rho^{\vec{y}}(\sigma_{ij})  \\
2. & \rho^{\vec{y}}(\sigma_{ij}) =
\trace_{\vec{x}\setminus\vec{y}}(\ketbra{\psi_{ij}}{\psi_{ij}}) & & 5. &
\rho(t) = \sum_{ij} g_{ij} \rho(t_{ij}) \\
3. & \rho(t_{ij}) = \rho(\sigma_{ij}) & & 6. & \rho^{\vec{y}}(t) = \sum_{ij} g_{ij} \rho^{\vec{y}}(t_{ij})
\end{array}
\]
\end{Definition}

We also introduce the notation $\rhoe$ to denote the reduced density matrix of the \emph{environment} qudits. Formally, if $t = (\qstore{\vec{x}}{\ket{\psi}}; \vec{y}; P)$ then $\rhoe(t) = \rho^{\vec{r}}(t)$ where $\vec{r} = \vec{x} \setminus \vec{y}$. The definition of $\rhoe$ is extended to mixed configurations in the same manner as $\rho$. The probabilistic function $\mu: \states \times \states \rightarrow
[0,1]$ is defined in the style of \cite{Trcka2008}. It allows non-deterministic transitions to be treated as transitions with probability $1$, which is necessary when calculating the total probability of reaching a terminal state. $\mu(t,u) = \delta$ if $t \ptrans{\delta} u$; $\mu(t,u) = 1$ if $t = u$ and
$t \in \nstates$; $\mu(t,u) = 0$ otherwise.

\begin{Definition}[Probabilistic Branching Bisimulation]
  \label{def:pbb}
  An equivalence relation $\mathcal{R}$
  on configurations is a \emph{probabilistic branching bisimulation} on configurations if whenever $(t,u) \in \mathcal{R}$ the following conditions are satisfied.
  \begin{enumerate}[I.]
	    \item If $t \in \nstates$ and $t \transition{\tau} t'$ 
	      then $\exists u', u''$ such that $u \weaktrans{ } u'
              \opttrans{\tau} u''$ with $(t, u') \in \mathcal{R}$ and $(t', u'') \in \mathcal{R}$.
	    \item If $t \transition{\outp{c}{V,\vec{X}_1}} t'$ where $t' = \Prob{j \in \{1\dots m\}}{p_j} t_j'$ and $V = \{\vec{v}_1,\dots,\vec{v}_m\}$ and $\vec{X}_1$ is either $\vec{q}_1$ or $\vec{s}_1$ then $\exists u', u''$ such that $u \weaktrans{ } u' \transition{\outp{c}{V,\vec{X}_2}} u''$ with
	      \begin{enumerate}[a)]
		\item $(t, u') \in \mathcal{R}$,
		\item $u'' = \Prob{j \in \{1\dots m\}}{p_j} u_j''$,
                \item for each $j \in \{1,\dots,m\}$, $\rhoe(t_j') = \rhoe(u_j'')$.
		\item for each $j \in \{1,\dots,m\}$, $(t_j', u_j'') \in \mathcal{R}$.
	      \end{enumerate}
        \item If $t \transition{\inp{c}{\vec{v}}} t'$ then $\exists
          u', u''$ such that $u \weaktrans{ } u'
          \transition{\inp{c}{\vec{v}}} u''$ with $(t, u') \in
          \mathcal{R}$ and $(t', u'') \in \mathcal{R}$.
	    \item If $s \in \pstates$ then $\mu(t, D) = \mu(u, D)$ for all classes $D \in \mathcal{T}/\mathcal{R}$.
    \end{enumerate}
\end{Definition}
This relation follows the standard definition of branching bisimulation \cite{Glabbeek1996} with additional conditions for probabilistic configurations and matching quantum information. In
condition II we require that the distinct set of values $V$ must match and although the names ($\vec{X}_1$ and $\vec{X}_2$) need not be identical which is the qudit names ($\vec{q}_1$ and $\vec{q}_2$), their respective reduced density matrices ($\rho^{\vec{X}_1}(t)$ and $\rho^{\vec{X}_2}(u')$) must.
Condition IV provides the matching on probabilistic configurations following the approach of \cite{Trcka2008}. It is necessary to include the latter condition to ensure that the
probabilities are paired with their respective configurations. This leads to the following definitions. The essential definitions are presented in this paper and more details will be presented in the future work.

\begin{Definition}[Probabilistic Branching Bisimilarity]
Configurations $t$ and $u$ are \emph{probabilistic branching bisimilar}, denoted $t \pbsim u$, if there exists a probabilistic branching bisimulation $\mathcal{R}$ such that $(t,u) \in \mathcal{R}$.
\end{Definition}

\begin{Definition}[Probabilistic Branching Bisimilarity of Processes]
Processes $P$ and $Q$ are \emph{probabilistic branching bisimilar}, denoted $P \pbsim Q$, if and only if for all $\sigma$, $(\sigma; \emptyset; P) \pbsim (\sigma; \emptyset; Q)$.
\end{Definition}

\begin{Definition}[Full probabilistic branching bisimilarity]
  Processes $P$ and $Q$ are \emph{full probabilistic branching bisimilar}, denoted $P \fpbsim Q$, if for all substitutions $\kappa$
and all quantum states
  $\sigma$, $(\sigma; \vec{q}; P\kappa) \pbsim (\sigma; \vec{q}; Q\kappa)$. 
\end{Definition}

Importantly, the equivalence is a \emph{congruence}, which means it is preserved in all contexts.

\begin{Theorem}[Full probabilistic branching bisimilarity is a congruence]
  \label{thm:congruence}
  If $P \fpbsim Q$ then for any context $\ctxt{C}{}$, if $\ctxt{C}{P}$ and $\ctxt{C}{Q}$ are typable then $\ctxt{C}{P} \fpbsim \ctxt{C}{Q}$. 
\end{Theorem}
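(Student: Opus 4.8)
The plan is to reduce the statement about arbitrary contexts to a statement about the individual process-forming operators, and then treat each operator in turn. A context $\ctxt{C}{}$ is built from the hole by the operators of CQP --- input and output prefixing, parallel composition, restriction $(\new c)$, qudit declaration $(\qdit \vec{y})$, the action prefix $\action{\cdot}$, summation, the conditional, and process invocation --- so by induction on the structure of $\ctxt{C}{}$ it suffices to show that $\fpbsim$ is preserved by each operator separately: if $P \fpbsim Q$ then $f(\ldots,P,\ldots) \fpbsim f(\ldots,Q,\ldots)$. The base case, where $\ctxt{C}{}$ is the hole, is just reflexivity of $\fpbsim$, and the inductive step composes the one-operator results.

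For a fixed operator $f$, unfolding the definition of $\fpbsim$ reduces the goal to showing $(\sigma; \vec{q}; f(P)\kappa) \pbsim (\sigma; \vec{q}; f(Q)\kappa)$ for every substitution $\kappa$ and quantum state $\sigma$. I would exhibit a witnessing relation $\mathcal{R}$ obtained by closing the set $\{(\mathcal{D}[t], \mathcal{D}[u]) : t \pbsim u\}$ --- where $\mathcal{D}$ ranges over the configuration-level contexts induced by $f$ --- under reflexivity, symmetry and transitivity so that it is an equivalence relation, and then verify conditions I--IV of Definition~\ref{def:pbb}. For the prefixes, restriction, qudit declaration and summation this is routine: every transition of the composite configuration is driven by a transition of the subconfiguration, the side conditions on free names are handled with the usual $\alpha$-conversion and the functions $\fc{\cdot}$ and $\fq{\cdot}$, and the probabilistic branches and measurement outcomes are inherited unchanged, so the matching transitions for the $Q$-side are supplied directly by the bisimulation for $t$ and $u$.

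The hard part will be parallel composition, and for essentially quantum reasons. In a configuration $(\sigma; \vec{q}; P \parallel R)$ the quantum state $\sigma$ is global, so the qudits owned by $R$ may be entangled with those manipulated by $P$; when $P$ performs an output, condition II requires matching the environment density matrices $\rhoe$, and under composition the environment has grown to include $R$'s qudits. The crux is therefore a lemma stating that partial trace over the enlarged environment factors compatibly with the original one --- intuitively, that $\trace$-ing out $R$'s qudits commutes with the reduced-state matching already guaranteed between $t$ and $u$ --- so that equality of $\rhoe$ on the components lifts to equality of $\rhoe$ on $P \parallel R$ and $Q \parallel R$. A second delicate point, inherited from working with branching rather than strong bisimilarity, is the bookkeeping of the weak and stuttering transitions $\weaktrans{}$ and $\opttrans{\tau}$ together with the probabilistic condition IV, which must be verified over the quotient classes via $\mu(\cdot, D)$; to make the candidate relation genuinely closed under composition and transitivity it is cleanest to argue with a semi-branching variant and then appeal to its coincidence with branching bisimilarity.

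Finally, I would observe that the passage from qubits to qudits does not affect the structure of any of these arguments. The only changes are the dimension $d$ of the underlying Hilbert space and the concrete gate definitions ($\gRC$, $\gLC$, $\gH$, $\gX^{j}$, $\gZ^{k}$), whereas the facts actually used --- linearity and cyclicity of the trace, the behaviour of partial trace under tensor factorisation, and unitary invariance --- hold verbatim for every $d$. Consequently Davidson's qubit congruence proof transfers essentially by replacing $2$ with $d$ throughout, and the real content of this theorem is the verification that every case, and in particular the parallel-composition case above, survives this generalisation.
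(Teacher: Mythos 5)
You cannot really be checked against ``the paper's own proof'' here, because the paper does not prove Theorem~\ref{thm:congruence} at all: the theorem is stated and deferred to Davidson's thesis \cite{DavidsonThesis}, where the congruence result is established for the \emph{qubit} calculus, and the paper itself admits that re-checking the definitions and theorems for qudits ``is still a work in progress.'' Read as a reconstruction of that missing proof, your plan has the right architecture, and it is essentially Davidson's: induction on contexts reduced to operator-wise preservation, substitution closure under $\kappa$ (which is precisely why \emph{full} bisimilarity is the notion that can be a congruence --- the input prefix binds variables), parallel composition as the crux handled through reduced density matrices of the environment, and the semi-branching device to obtain an equivalence relation compatible with the stuttering clauses. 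Your closing observation, that none of the ingredients (linearity of trace, behaviour of partial trace under tensor factorisation, unitary invariance) depends on $d=2$, is exactly the claim the paper wants and leaves unverified.

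There is, however, a genuine gap: the summation case, which you classify as routine, in fact fails for the calculus as defined in this paper. Unrooted branching (or weak) bisimilarity is not preserved by $+$, and the classical counterexample imports directly into CQP, whose grammar (Figure~\ref{fig:cqp_syntax}) includes unguarded $P+P$. Take $R = \action{\num{0}}\sep\outp{c}{\num{0}}\sep\nil$ and $S = \outp{c}{\num{0}}\sep\nil$: evaluating the trivial action is a $\tau$-move leaving the quantum state unchanged, so $R \pbsim S$ (condition I of Definition~\ref{def:pbb} lets the $S$-side stutter), and indeed $R \fpbsim S$. But the context $\ctxt{C}{} = [\,] + \outp{b}{\num{1}}\sep\nil$ separates them: the transition $\ctxt{C}{S} \transition{\outp{c}{\num{0}}} \nil$ can only be matched by $\ctxt{C}{R}$ after the $\tau$-step that resolves the choice, and the resulting intermediate configuration has discarded the $b$-summand, so clause II(a), which demands $(\ctxt{C}{S}, u') \in \mathcal{R}$ for the intermediate $u'$, cannot be satisfied. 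So either the equivalence must be rooted (initial $\tau$-moves matched strongly, in the style of observational congruence) or contexts must exclude unguarded sums. Davidson's calculus, the source you are reconstructing, simply has no $+$ operator, which is why this problem is invisible there; it becomes a real obligation --- arguably a counterexample to the theorem as stated --- once this paper's syntax is taken at face value.

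A secondary caveat: naming the parallel-composition lemma is not the same as discharging it. The interaction between measurement-induced mixed configurations in one component and entangled qudits owned by the other is essentially the entire technical content of the qubit proof, and it is exactly the part whose qudit generalisation the paper flags as unchecked; your sketch correctly locates the difficulty but, like the paper, stops short of resolving it.
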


\subsection{Correctness of $Teleport$}

We now sketch the proof that $\pteleport \fpbsim \pqwire$, which by Theorem~\ref{thm:congruence} implies that this works in any context.
works in any context. 
\begin{proposition}
  $\pteleport \fpbsim \pqwire$.
\end{proposition}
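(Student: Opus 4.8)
The plan is to peel off the quantifiers in the definition of $\fpbsim$ and then produce a witnessing bisimulation. It suffices to show that for every substitution $\kappa$ and every quantum state $\sigma$ we have $(\sigma; \vec{q}; \pteleport\kappa) \pbsim (\sigma; \vec{q}; \pqwire\kappa)$, and for this it is enough to exhibit one equivalence relation $\mathcal{R}$ on configurations that contains the initial pair and satisfies conditions~I--IV of Definition~\ref{def:pbb}. I would construct $\mathcal{R}$ by enumerating the reachable configurations of both sides and grouping them into classes at each observable boundary.

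First I would compute the transition graph of $\pteleport$. After the input of the qudit on channel $c$, the process performs the gate actions $\gLC$ on $x,z$ and $\gH$ on $x$, which the operational rules turn into silent $\tau$-transitions. The two measurements $\measure z$ and $\measure x$ in the output prefix of $\palice$ then induce the probabilistic branching: the configuration steps into a state in $\pstates$ that splits into $d^2$ outcomes, each carrying its own classical pair $(M_1,M_2)$ and its own collapsed quantum state. The transmission of $(M_1,M_2)$ on the restricted channel $e$ is an internal communication and hence another $\tau$-transition, after which $\pbob$ applies the corrections $\gX^{-M_1}$ and $\gZ^{M_2}$ (again $\tau$) and finally outputs the qudit on $d$. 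By contrast $\pqwire$ only receives a qudit on $c$ and re-emits it on $d$, with no intervening branching.

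The relation $\mathcal{R}$ then aligns the two sides at each observable boundary: the initial pair is related; the input on $c$ is matched directly by condition~III; and the entire block of gates, measurement, internal communication on $e$ and corrections on the $\pteleport$ side is absorbed into the empty weak transition $\weaktrans{}$ preceding the output of $\pqwire$, using condition~I for the intervening $\tau$-steps. The crucial design choice is to place all $d^2$ post-measurement branches of $\pteleport$ into a single class together with the pre-output configuration of $\pqwire$: although these branches carry distinct classical values and distinct collapsed states, each of them reaches its output on $d$ through silent steps only, so they are mutually branching-bisimilar.

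The hard part is discharging condition~II(c), that $\rhoe(t_j') = \rhoe(u_j'')$ for every outcome $j$. This is precisely where the quantum correctness of qudit teleportation is needed: using the generalised gate and Bell-state identities recalled in Section~\ref{sec:prelim}, I would verify that for each outcome $(M_1,M_2)$ the correction $\gX^{-M_1}$ followed by $\gZ^{M_2}$ maps the collapsed qudit back to the original $\ket{\psi}$, so that the reduced density matrix of the emitted qudit is $\ketbra{\psi}{\psi}$ in every branch and equals the one produced by $\pqwire$. Because all branches therefore share the same reduced density matrix, they collapse into the single class described above, and this is exactly what makes condition~IV hold at the measurement state: the function $\mu$ assigns total probability $1$ to that class on the $\pteleport$ side, matching $\pqwire$. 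Once these density-matrix calculations are settled the remaining clauses are routine, $\mathcal{R}$ is a probabilistic branching bisimulation, and the proposition follows; Theorem~\ref{thm:congruence} then guarantees that the equivalence persists in every context.
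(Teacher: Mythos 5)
Your proposal is correct and follows essentially the same route as the paper's own proof: both construct the equivalence relation by grouping reachable configurations into classes delimited by the observable transitions (initial, after the input on $c$, after the output on $d$), absorb the gate actions, measurement, internal communication on the restricted channel $e$ and corrections into weak $\tau$-transitions, and reduce the quantum content to checking that every post-measurement branch emits a qudit with reduced density matrix $\ketbra{\psi}{\psi}$, so that condition~IV holds with total probability $1$ on a single class. The only difference is one of detail: you propose to verify the density-matrix equality directly from the generalised gate and Bell-state identities, whereas the paper delegates that computation to the execution analysis in its cited prior work.
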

 \begin{proof}
 First we prove that $\pteleport \pbsim \pqwire$, by defining an equivalence relation $\mathcal{R}$ that contains the pair $(\cnfig{\sigma}{\emptyset}{\pteleport},\cnfig{\sigma}{\emptyset}{\pqwire})$ for all $\sigma$ and is closed under their transitions. $\mathcal{R}$ is defined by taking its equivalence classes to be the $F_i(\sigma)$ defined below, for all states $\sigma$, which group configurations according to the sequences of observable transitions leading to them.
 \[
 \begin{array}{rcl}
 \pname{F_{1}(\sigma)} & = & \{f \mid
 (\sigma;\emptyset;P)\weaktrans{}f ~\mbox{and}~ P \in E\}\\
 \pname{F_{2}(\sigma)} & = & \{f \mid
 (\sigma;\emptyset;P)\weaktrans{\inp{c}{q_{1}}}f ~\mbox{and}~ P \in E\}\\
 \pname{F_{3}(\sigma)} & = & \{f  \mid  (\sigma;\emptyset;P)
 \weaktrans{\inp{a}{q_{1}}} \weaktrans{\outp{d}{q_{2}}}f ~\mbox{and}~ P \in E\}
 \end{array}
 \]
Here $E$ is $\{\pteleport, \pqwire\}$ and we now prove that $\mathcal{R}$ is a probabilistic branching
 bisimulation. It suffices to consider transitions between $F_i$
 classes, as transitions within classes must be $\tau$ and are matched
 by $\tau$.  If $f, g \in F_{1}(\sigma)$ and $f\transition{\tau}f'$ then $f'\in F_{1}(\sigma)$ and therefore $(f',g) \in \mathcal{R}$. Otherwise if $f\transition{\inp{c}{q_1}}f'$ then $f'\in F_{2}(\sigma)$ and we find $g',g''$ such that $g\weaktrans{}g'\transition{\inp{c}{q_{1}}}g''$ with $g'\in F_{1}(\sigma)$ and $g''\in F_{2}(\sigma)$, so $(f,g')\in\mathcal{R}$ and $(f',g'')\in\mathcal{R}$ as required. 
 
 If $f, g \in F_{2}(\sigma)$ and $f\transition{\tau}f'$ then $f'\in F_{2}(\sigma)$ and therefore $(f',g) \in \mathcal{R}$. Otherwise if $f\transition{\outp{d}{q_2}}f'$ then $f'\in F_{3}(\sigma)$ and we find $g',g''$ such that $g\weaktrans{}g'\transition{\outp{d}{q_{1}}}g''$ with $g'\in F_{2}(\sigma)$ and $g''\in F_{3}(\sigma)$.

If $f$ is a mixed configuration arising from $\pteleport$ then for any arbitrary state $\sigma$, with reference to the execution of the protocol discussed in \cite{Gay2013}, we have the same reduced density matrices for both processes.  \qed
\end{proof}

\section{Conclusion and Future Work}
\label{sec-conclusion}
\label{sec:conclusion}
We have presented the extensions of the language and theories of CQP to analyse higher dimensional quantum protocols like qudit teleportation. Investigations are in progress to apply this analysis to other protocols like superdense coding protocol and secret sharing. We believe this should be completed very soon. Previous work on CQP has established the possibility of modelling and analysing a physical realisation of quantum computing systems \cite{Arnold2013,Arnold2014} such as linear optical quantum computing. This is achieved by defining the physical linear optical elements in CQP, and then analysing the optical gates that are used for quantum information processing. In the current line of work, the next important future task is to extend these CQP definition to model realistic higher dimensional quantum systems, that is, systems that exploit the orbital angular momentum of light. The advantage of using process calculus approach is that it provides a systematic methodology for verification of quantum systems. Significantly, the equivalence defined in CQP is a congruence, meaning that equivalent processes remain equivalent in any context, and supporting equational reasoning \cite{Gay2015}. Following the established work in process calculi, one of the long-term goal in CQP is to develop software for automated analysis of quantum information processing systems.

\bibliographystyle{splncs}

\bibliography{report_main}

\section{Appendix}
\label{sec:Semantics}
\label{sec-Semantics}
\subsection{Syntax}
\begin{figure*}[h!]
  \begin{eqnarray*}
    T & \bnf & \Int \alt \Qbit \alt \NS \alt \chant{\tilde{T}} \alt \Op{1} \alt \Op{2} \alt \cdots \\
    v & \bnf & \mktype{0} \alt \mktype{1} \alt \cdots \alt \qgate{H} \alt \cdots \\
    e & \bnf & v \alt x \alt \measure{\tilde{e}} \alt {\tilde{e}}\trans{e^e} \alt e+e \\
    P & \bnf & \nil \alt (P | P) \alt P + P \alt \inp{e}{\tilde{x}:\tilde{T}}.P \alt  \outp{e}{\tilde{e}}.P \alt \{e\}.P \alt [e].P \alt (\qdit x)P \alt  (\new x:\chant{T})P 
  \end{eqnarray*}
  \caption{\label{fig:cqp_syntax}Syntax of CQP.}
\end{figure*}
The behaviour of the processes is precisely specified by the formal semantics of CQP. Full details can be found in Puthoor's thesis \cite{PuthoorThesis}. 
\begin{figure*}[t]
  \begin{eqnarray*}
    v & \bnf & \ldots \alt q \alt c \\
    E & \bnf & [] \alt \measure{E,\tilde{e}} \alt \measure{v, E, \tilde{e}} \alt \dots \alt \measure{\tilde{v}, E} \alt E + e \alt v + E \\
    F & \bnf & \inp{[]}{\tilde{x}}.P \alt \outp{[]}{\tilde{e}}.P \alt \outp{v}{[].\tilde{e}}.P \alt \outp{v}{v,[],\tilde{e}}.P \alt \cdots \alt \outp{v}{\tilde{v},[]}.P \alt \{[]\}.P
  \end{eqnarray*}
  \caption{\label{fig:cqp_internal_syntax}Internal syntax of CQP.}
\end{figure*}
The syntax of CQP is defined by the grammar as shown in Figure \ref{fig:cqp_syntax}. We use the notation $\tilde{e} = e_1,\ldots,e_n$, and write $|\tilde{e}|$ for the length of a
tuple. The syntax consists of types $T$, values $v$, expressions $e$ (including quantum measurements and the conditional application of unitary operators $\tilde{e}\trans{e^{e}}$), and processes $P$. Values $v$ consist of variables ($x$,$y$,$z$ etc), literal values of data types (0,1,..), unitary operators such as the Hadamard operator $\mathsf{H}$. Expressions $e$ consist of values, measurements $\measure e_{1},\dots,e_{n}$, applications $e_{1},\dots,e_{n} \trans e$ of unitary operators, and expressions involving data operators such as $e + e'$. Processes include the nil process $\nil$, parallel composition $P|P$, inputs $\inp{e}{\tid{\vec{x}}{\vec{T}}}.P$, outputs $\outp{e}{\vec{e}}.P$, actions $\{e\}.P$ (typically a unitary operation or measurement), typed channel restriction $(\new x: \chant{\vec{T}})P$ and qudit declaration $(\qdit x)P$. In order to define the operational semantics we provide the \emph{internal syntax} in Figure \ref{fig:cqp_internal_syntax}. Values are supplemented with qudit names $q$, which are generated at run-time and substituted for the variables used in $\qdit$. Evaluation contexts for expressions ($E[]$) and processes ($F[]$) are used to define the operational semantics \cite{Wright1994}.

\subsection{Semantics}

In CQP,  the execution of a system is not completely described by the process term (which is the case for classical process calculus) but also depends on the quantum state. Hence the operational semantics are defined using \emph{configurations}, which represent both the quantum state and the process term. 
A \emph{configuration} is a tuple ($\sigma;\omega;P$) where $\sigma$ is a mapping from qudit names to the quantum state, $\omega$ is a list of qudit names, and P is a process. 
For example, we have a configurations such as
\begin{equation}\label{eq:quantum-output}
\cnfig{\qstore{q,r}{\frac{1}{\sqrt{d}}\sum_{j=0}^{d-1}\ket{j}_{q}\otimes\ket{j}_{r}}}{q}{\outp{c}{q}\sep P}.
\end{equation}
This configuration means that the global quantum state consists of two qudits, $q$ and $r$, in the specified state; that the process term under consideration has only access to qudit $q$; and that the process itself is $\outp{c}{q}\sep P$. Now consider a configuration with the same quantum state but a different process term:
\[
\cnfig{\qstore{q,r}{\frac{1}{\sqrt{d}}\sum_{j=0}^{d-1}\ket{j}_{q}\otimes\ket{j}_{r}}}{r}{\outp{b}{r}\sep Q}.
\]
The parallel composition of these configurations is the following:
\[
\cnfig{\qstore{q,r}{\frac{1}{\sqrt{d}}\sum_{j=0}^{d-1}\ket{j}_{q}\otimes\ket{j}_{r}}}{q,r}{\outp{c}{q}\sep P
\parallel \outp{b}{r}\sep Q}
\]
where the quantum state is still the same.  Just like the classical process calculus, we have the semantics of CQP defined in terms of labelled transitions between the configurations.  For example, configuration (\ref{eq:quantum-output})
has the transition
\[
\ptrns{\cnfig{\qstore{q,r}{\frac{1}{\sqrt{d}}\sum_{j=0}^{d-1}\ket{j}_{q}\otimes\ket{j}_{r}}}{q}{\outp{c}{q}\sep
  P}}{\outp{c}{q}}{\cnfig{\qstore{q,r}{\frac{1}{\sqrt{d}}\sum_{j=0}^{d-1}\ket{j}_{q}\otimes\ket{j}_{r}}}{\emptyset}{P}}.
\]
The quantum state is not changed by this transition, but because qudit $q$ is output, the continuation process $P$ no longer has access to it; the final configuration has an empty list of owned qudits. 

Importantly, we should note that in CQP, we get a probability distribution over configurations after a measurement, which then later reduces probabilistically to one particular configuration. To prove that the equivalence of CQP, a more sophisticated analysis of measurement in the semantics called the \emph{mixed configurations} is included. We define a \emph{mixed configuration} as a weighted distribution over pure configurations. 
A mixed configuration is a weighted distribution, written as 
\[
\begin{array}{l}
\Dist{i \in I}{g_i} ( \qstore{q}{\ket{\psi_{i}}} ; \omega; \ltrm{x}{P}{\tilde{v_{i}}} ) 
\end{array}
\]
with weights $g_{i}$ where $\sum_{i\in I}g_{i} = 1$ and for each $i \in I, 0 < g_{i} \le 1$ and $\ket{\psi_{i}} \in \mathbb{H}^{2|\tilde{q}|}$ and $|\tilde{v_{i}}| = |\tilde{x}|$. The operator $\oplus$ represents a distribution over the index set $I$ with weights $g_{i}$. The process term is replaced by the expression $\lambda\tilde{x}.P;\tilde{v_{i}}$ which indicates that in each component the variables $\tilde{x}$, appearing in $P$ as placeholders, should be substituted for the values $\tilde{v_{i}}$. A pure configuration can be considered as a mixed configurations with a single component. If the observer does not get the result of a quantum measurement then we say that the system is in a mixed state.

A few of the important transition relations for evaluating values and expressions are defined by the rules in Figure~\ref{fig:trans_exp}. The complete details are provided in \cite{PuthoorThesis,Gay2013}
\begin{figure*}
  \begin{gather*}
    \tag{\Rplus}
    ( \qstore{\vec{q}}{\ket{\psi}} ; \omega; u + v) \transitionv ( \qstore{\vec{q}}{\ket{\psi}} ; \omega; \ltrm{x}{x}{w}) \quad \textrm{where $w = u + v$} \\
    \tag{\Rmeasure}
    \begin{array}{r}
      ( \qstore{q_0,\dots,q_{n-1}}{\alpha_0 \ket{\phi_0} + \dots + \alpha_{d^n-1} \ket{ \phi_{d^n-1}}}; \omega; \measure{ q_0,\dots, q_{r-1} } ) \transitionv \\
      \Dist{0 \le m < d^r}{g_m} ( \qstore{q_0, \dots, q_{n-1}}{\frac{\alpha_{l_m}}{\sqrt{g_m}} \ket{ \phi_{l_m}} + \dots + \frac{ \alpha_{u_m}}{\sqrt{g_m}} \ket{ \phi_{u_m}}}; \omega; \ltrm{x}{x}{m} )
    \end{array} \\
    \textrm{where } l_m = d^{n-r}m, u_m = d^{n-r}(m+1)-1, g_m = |\alpha_{l_m}|^2 + \dots + |\alpha_{u_m}|^2 \\
    \tag{\Rtrans}
    ( \qstore{q_0,\dots,q_{n-1}}{\ket{\phi}} ; \omega; q_0,\dots,q_{r-1}\trans{U^m} ) \transitionv \hspace{20mm} \\
    \hspace{30mm} (\qstore{q_0,\dots,q_{n-1}}{(U^m \otimes I_{n-r}) \ket{\phi}}; \omega; \unit; \cdot ) \\
    \tag{\Rcontext}
    \begin{prooftree}
      \forall i \in I. ( \qstore{\vec{q}}{\ket{\psi_i}} ; \omega; e\{\vec{u}_i/\vec{y}\}) \transitionv \Dist{j \in J_i}{g_{ij}} ( \qstore{\vec{q}}{\ket{\psi_{ij}}} ; \omega; \ltrm{\vec{x}}{e'\{\vec{u}_i/\vec{y}\}}{\vec{v}_{ij}} )
      \justifies
      \Dist{i \in I}{h_i} ( \qstore{\vec{q}}{\ket{\psi_i}} ; \omega; \ltrm{\vec{y}}{E[e]}{\vec{u}_i}) \transitione \Dist{\substack{i \in I\\j \in J_i}}{h_ig_{ij}} ( \qstore{\vec{q}}{\ket{\psi_{ij}}} ; \omega; \ltrm{\vec{y}\vec{x}}{E[e']}{\vec{u}_i,\vec{v}_{ij}})
    \end{prooftree}
  \end{gather*}
  \caption{Transition rules for values and expressions.}
\label{fig:trans_exp}
\end{figure*}

\end{document}